\titleformat{\section}{\large\bfseries}{\thesection}{1em}{}
\newtheorem{thm}{Theorem}[section]
\newtheorem{prop}[thm]{Proposition}
\newtheorem{definition}[thm]{Definition}
\theoremstyle{definition}
\newtheorem{remark}[thm]{Remark}
\def\C{{\mathbb C}}
\def\P{{\mathbb P}}
\begin{document}

\begin{center}
{\Large {Quadrirational Yang-Baxter maps and the elliptic Cremona system}}

\vskip10mm
{\large James Atkinson\footnote{
School of Mathematics and Statistics,
University of Sydney, NSW 2006, Australia.
james.l.atkinson@gmail.com} 
and 
Yasuhiko Yamada\footnote{
Department of Mathematics, Faculty of Science,
Kobe University, 657-8501, Japan. yamaday@math.kobe-u.ac.jp
}}
\end{center}

\vskip10mm

\noindent
{\bf Abstract.}
This paper connects the quadrirational Yang-Baxter maps, which are two-dimensional integrable discrete systems of KdV type, and the elliptic Cremona system, which is a higher analogue of discrete Painlev\'e equations associated with $\tilde{E}_8$ symmetry.
This is a natural connection between integrable systems in different dimensions that is outside of the usual paradigm of reductions.
Our approach is based on formulation of both systems in terms of birational Coxeter groups.

\section{Introduction}\label{INTRO}
The birational group of Coble \cite{COBI,COBII} plays a key role in the geometric theory of the discrete Painlev\'e equations \cite{sc,10E9,kmnoy}.
This paper connects Coble's group with a birational group obtained as a generalisation of certain integrable discrete systems of KdV type called quadrirational Yang-Baxter maps and multi-quadratic quad-equations \cite{ABSf,AtkNie,ib}.
The associated abstract groups are different, however, they correspond to Coxeter graphs of variable extent (depending on some indices), which allows the following characterisation:
Coble's group is obtained as a parabolic subgroup of the group associated with the quadrirational maps, which in turn is obtained as the centralizer of a certain parabolic subgroup in Coble's group.

The connection between birational groups is the technical result given in this paper, its significance is because they have different points of origin.
The immediate benefits are, (i) a rational realisation of Coble's group in certain geometric (un-normalised) variables that were known previously only through parameterisation in compatible linear actions, and, (ii) the natural (seed, or background) elliptic solutions of the group associated with the quadrirational maps.

Broader consideration of Coble's group and its relation to the Painlev\'e equations has led to the elliptic Cremona system in \cite{kmnoy}, which is recalled in Section \ref{Painleve}.
The group associated with the quadrirational maps is recalled in Section \ref{quadrirational}.
The connection between the groups on the level of birational actions is established in Section \ref{twogroups}. 
In Section \ref{solution} the contact is made on the level of the compatible linear actions (solutions) that enter through parameterisation in terms of the elliptic functions.

\section{Coble's group and the elliptic Painlev\'e equation}\label{Painleve}

The elliptic difference Painlev\'e equation, the master equation among the second order discrete Painlev\'e equations, was discovered and constructed by Sakai \cite{sc}. The fundamental device in the construction is the birational representation of the affine Weyl group $\tilde{E}_8$. The elliptic Painlev\'e equation is  obtained as the translation part of the affine Weyl group.
Following \cite{kmnoy}, we recall the general setting of Coble's birational actions on point-sets in projective space, which contains the affine Weyl group $\tilde{E}_8$ action as a special case.

Let $X\in {\rm Mat}_{j+1,i+j+2}(\C)$
be a configuration of $i+j+2$ points (columns of $X$) on $\P^j$ and the natural coset space $\mathcal{M}_{i,j}$ given by
\begin{equation}
\mathcal{M}_{i,j}={\rm PGL}(j+1,\C) \setminus {\rm Mat}_{j+1,i+j+2}(\C) /{(\C^{*})^{i+j+2}}.
\end{equation}
We have an open chart of $\mathcal{M}_{i,j}$ whose coordinates are given by the canonical form
\begin{equation}\label{Xcan}
X_{\rm can}=\left[\begin{array}{ccccccccc}
0&\cdots&0&0&1&1&1&\cdots&1\\
0&\cdots&0&1&0&1&x_{11}&\cdots&x_{i1}\\
0&\cdots&1&0&0&1&x_{12}&\cdots&x_{i2}\\
\vdots&\iddots&\vdots&\vdots&\vdots&\vdots&\vdots&\vdots&\vdots\\
1&\cdots&0&0&0&1&x_{1j}&\cdots&x_{ij}
\end{array}\right].
\end{equation}

On the space $\mathcal{M}_{i,j}$, there exist natural 
birational actions $w_0, w_1, \ldots, w_{i+j+1}$, where
$w_0$ is the inversion $x_{mn} \to 1/{x_{mn}}$ and $w_n$ ($n\in \{1, \ldots, i+j+1\}$)  arises as the permutation of $n$-th and
$(n+1)$-th column of the matrix $X_{\rm can}$. 
The explicit actions are easily computed on the chart (\ref{Xcan}) and give the following
\begin{definition}\label{coble}
Coble's actions \cite{COBI,COBII} on entries of $X_{\rm can}$ (\ref{Xcan})
corresponding to Figure \ref{cdd3}: 
\begin{equation}\label{coble-actions}
\begin{array}{rll}
w_0:&x_{mn} \rightarrow 1/x_{mn}, & m\in\{1,\ldots,i\}, \ n\in\{1,\ldots,j\}, \\
w_{j+1}:&x_{mn} \rightarrow 1-x_{mn}, & m\in\{1,\ldots,i\}, \ n\in\{1,\ldots,j\}, \\
w_j:&x_{m1} \rightarrow 1/x_{m1}, x_{mn} \rightarrow x_{mn}/x_{m1}, & m\in\{1,\ldots,i\}, \ n\in\{2,\ldots,j\},\\
w_{j+1-n}:&x_{m(n-1)} \leftrightarrow x_{mn}, & m\in\{1,\ldots,i\}, \ n\in\{2,\ldots,j\},\\
w_{j+2}:&x_{1n} \rightarrow 1/x_{1n}, x_{mn} \rightarrow x_{mn}/x_{1n}, & m\in\{2,\ldots,i\}, \ n\in\{1,\ldots,j\},\\
w_{j+1+m}:&x_{(m-1)n} \leftrightarrow x_{mn}, & m\in\{2,\ldots,i\}, \ n\in\{1,\ldots,j\}.\\
\end{array}
\end{equation}
\end{definition}
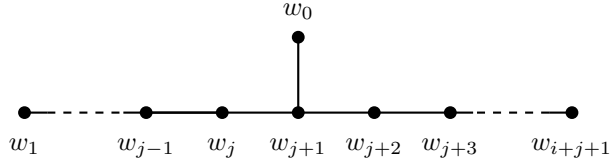
\begin{figure}[t]
\begin{center}
\begin{tikzpicture}[thick]
  \tikzstyle{every node}=[draw=none,minimum size=0pt,inner sep=0pt];
  \node (s1) at (5,1) [label={[label distance=9pt]below:$w_j$}]{};
  \node (s2) at (4,1) [label={[label distance=9pt]below:$w_{j-1}$}]{};
  \node (sj) at (2.4,1) [label={[label distance=9pt]below:$w_1$}]{};
  \node (r0) at (6,1) [label={[label distance=9pt]below:$w_{j+1}$}]{};
  \node (r1) at (6,2) [label={[label distance=6pt]above:$w_0$}]{};
  \node (t1) at (7,1) [label={[label distance=9pt]below:$w_{j+2}$}]{};
  \node (t2) at (8,1) [label={[label distance=9pt]below:$w_{j+3}$}]{};
  \node (ti) at (9.6,1) [label={[label distance=9pt]below:$w_{i+j+1}$}]{};
  \tikzstyle{every node}=[draw,circle,fill=black,minimum size=4pt,inner sep=0pt];
  \draw (s1) node{};
  \draw (s2) node{};
  \draw (sj) node{};
  \draw (r0) node{};
  \draw (r1) node{};
  \draw (t1) node{};
  \draw (t2) node{};
  \draw (ti) node{};
  \draw (s2)--(s1)--(r0)--(t1)--(t2);
  \draw (r0)--(r1);
  \draw (sj)[solid]--(2.7,1);
  \draw (2.7,1)[dashed]--(3.7,1);
  \draw (3.7,1)[solid]--(s1);
  \draw (ti)[solid]--(9.3,1);
  \draw (9.3,1)[dashed]--(8.3,1);
  \draw (8.3,1)[solid]--(t2);
\end{tikzpicture}
\end{center}
\caption{Coxeter-graph for the Coble actions.}
\label{cdd3}
\end{figure}
Linear actions compatible with the birational ones are via a substitution through the function $u \mapsto [u]$, characterised as an odd function satisfying the Riemann relation (see proof below), which in the generic case corresponds to Weierstrass' $\sigma$ function.
\begin{prop}[\cite{kmnoy}]\label{ee}
On the variables
\begin{equation}\label{epsilon}
\epsilon_0,\ldots,\epsilon_{i+j+2},
\end{equation}
the linear actions
\begin{equation}\label{w-linear}
\begin{array}{rll}
w_0:& \epsilon_n \rightarrow \alpha_0+\epsilon_n, \quad& n\in\{0,\ldots,j+1\},\\
w_n:& \epsilon_n \leftrightarrow \epsilon_{n+1}, \quad& n\in\{1,\ldots,i+j+1\},
\end{array}
\end{equation}
where $\alpha_0:=(j-1)\epsilon_0-\epsilon_1 - \epsilon_2 \cdots - \epsilon_{j+1}$,
are compatible with Coble's actions (Definition \ref{coble}), via the substitution
\begin{equation}\label{esub}
x_{mn}=\frac
{[\alpha_0+\epsilon_{j+1-n,j+m+2}][\alpha_0+\epsilon_{j+1,j+2}][\epsilon_{j+1,j+m+2}][\epsilon_{j+1-n,j+2}]}
{[\alpha_0+\epsilon_{j+1,j+m+2}][\alpha_0+\epsilon_{j+1-n,j+2}][\epsilon_{j+1-n,j+m+2}][\epsilon_{j+1,j+2}]},
\end{equation}
where $\epsilon_{a,b}=\epsilon_a-\epsilon_b$, $m\in \{1,\ldots,i\}$ and $n\in\{1,\ldots,j\}$.
\end{prop}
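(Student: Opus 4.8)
The plan is to show that the substitution (\ref{esub}) intertwines the linear action (\ref{w-linear}) with Coble's birational action (\ref{coble-actions}): since both are actions of one and the same Coxeter group (Figure~\ref{cdd3}), it suffices to verify, for each generator $w$ and each admissible pair $(m,n)$, the relation $x_{mn}(w(\epsilon))=w(x_{mn}(\epsilon))$, where on the left $w$ acts linearly on the $\epsilon_k$ and on the right birationally on $x_{mn}$. The only facts about $u\mapsto[u]$ that enter are that it is odd and that it satisfies the Riemann relation, which I record here in the three-term form
\[
[a+b][a-b][c+d][c-d]+[b+c][b-c][a+d][a-d]+[c+a][c-a][b+d][b-d]=0
\]
(the generic solution of which is Weierstrass' $\sigma$).

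First I would dispose of the permutations and the boundary inversions. For $w_k$, $1\le k\le j-1$, the transposition $\epsilon_k\leftrightarrow\epsilon_{k+1}$ fixes $\alpha_0$ (symmetric in $\epsilon_1,\dots,\epsilon_{j+1}$) and all indices $\ge j+1$, and one reads directly from (\ref{esub}) that it realizes the column transposition $x_{m,j-k}\leftrightarrow x_{m,j+1-k}$ of (\ref{coble-actions}); the row transpositions $w_k$, $j+3\le k\le i+j+1$, are identical with the index $j+m+2$ now toggled between consecutive rows. For $w_j$ the transposition $\epsilon_j\leftrightarrow\epsilon_{j+1}$ again fixes $\alpha_0$, and a short cancellation in (\ref{esub}) shows it sends $x_{m1}$ to $1/x_{m1}$ and $x_{mn}$ to $x_{mn}/x_{m1}$ for $n\ge 2$; the case $w_{j+2}$ is the same computation with the row index $j+m+2$ playing the role of $j+1-n$ and $\epsilon_{j+2}\leftrightarrow\epsilon_{j+3}$. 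For $w_0$ one first computes $w_0(\alpha_0)=-\alpha_0$ directly from the definition of $\alpha_0$; then the shift $\epsilon_k\mapsto\alpha_0+\epsilon_k$ ($0\le k\le j+1$) together with $\alpha_0\mapsto-\alpha_0$ carries every factor $[\alpha_0+\epsilon_{\bullet\bullet}]$ of (\ref{esub}) to $[\epsilon_{\bullet\bullet}]$ and every $[\epsilon_{\bullet\bullet}]$ to $[\alpha_0+\epsilon_{\bullet\bullet}]$, interchanging numerator and denominator, so $x_{mn}\mapsto 1/x_{mn}$.

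The essential case is $w_{j+1}$, where $w_{j+1}(\alpha_0)=\alpha_0+\epsilon_{j+1,j+2}$ and the transposition $\epsilon_{j+1}\leftrightarrow\epsilon_{j+2}$ must reproduce $x_{mn}\mapsto 1-x_{mn}$. Abbreviating $a=j+1-n$, $b=j+m+2$, $p=j+1$, $q=j+2$, a direct substitution (with oddness of $[\cdot]$ absorbing the sign from $\epsilon_{pq}\mapsto-\epsilon_{pq}$) reduces $x_{mn}(w_{j+1}(\epsilon))=1-x_{mn}(\epsilon)$ to the single identity
\[
[\alpha_0+\epsilon_{pb}][\alpha_0+\epsilon_{aq}][\epsilon_{ab}][\epsilon_{pq}]-[\alpha_0+\epsilon_{ab}][\alpha_0+\epsilon_{pq}][\epsilon_{pb}][\epsilon_{aq}]+[\alpha_0+\epsilon_{ab}+\epsilon_{pq}][\alpha_0][\epsilon_{qb}][\epsilon_{ap}]=0 .
\]
Setting $t=\tfrac{1}{2}(\epsilon_a+\epsilon_p-\epsilon_b-\epsilon_q)$, $r=\tfrac{1}{2}(\epsilon_a+\epsilon_q-\epsilon_b-\epsilon_p)$, $s=\tfrac{1}{2}(\epsilon_p+\epsilon_q-\epsilon_a-\epsilon_b)$ and $\xi=\alpha_0+t$, one checks that the three products are $[\xi+s][\xi-s][t+r][t-r]$, $-[\xi+r][\xi-r][t+s][t-s]$ and $[\xi+t][\xi-t][r+s][r-s]$, so that, after the sign changes permitted by oddness, the relation coincides with the three-term identity above evaluated at $(a,b,c,d)=(\xi,s,r,t)$. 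I expect this last step — organising the combination $1-x_{mn}$ so that it is manifestly a single instance of the Riemann relation, and keeping the signs straight — to be the only real obstacle; everything else is bookkeeping with (\ref{esub}).
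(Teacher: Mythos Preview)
Your proposal is correct and follows essentially the same route as the paper's proof: verify compatibility generator by generator, with the permutation-type generators and $w_0$, $w_j$, $w_{j+2}$ checked by inspection and the only substantive case $w_{j+1}$ reduced to a single instance of the Riemann relation. The paper's proof is terser---it merely asserts that $w_{j+1}$ reduces to the Riemann identity (and records the induced actions $w_0:\alpha_0\mapsto-\alpha_0$, $w_{j+1}:\alpha_0\mapsto\alpha_0+\epsilon_{j+1,j+2}$)---whereas you carry out the explicit change of variables $(\xi,s,r,t)$ to match the three-term form; but the underlying argument is the same.
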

\begin{proof}
The compatibilities are verified without any condition on the function $u\mapsto[u]$ for all actions except: 
$w_0,w_j,w_{j+2}$ that are satisfied if it is assumed $[-u]=-[u]$, and $w_{j+1}$ where the same assumption reduces compatibility to the Riemann relation,
\begin{equation}
\begin{split}
[u+v][u-v][w+x][w-x]
&+[u+w][u-w][x+v][x-v]\\
&\qquad =[u+x][u-x][w+v][w-v].
\end{split}
\end{equation}
Note that the non-trivial actions on $\alpha_0$ that are a consequence of (\ref{w-linear}) are $w_0:\alpha_0\rightarrow -\alpha_0$ and $w_{j+1}:\alpha_0\rightarrow \alpha_0+\epsilon_{j+1,j+2}$.
\end{proof}
\begin{remark} \label{normal}
Though the space $\mathcal{M}_{i,j}$ is a natural compactification of the open space $\C^{ij}$ of the data (\ref{Xdef}),
there is another compactification given by
\begin{equation}
\mathcal{M}'_{i,j}={\rm PGL}(2,\C)^{\otimes j} \setminus {\rm Mat}_{j,i+3},
\end{equation}
where ${\rm PGL}(2,\C)^{\otimes j}$ acts on the elements of $  {\rm Mat}_{j,i+3}$ as M\"obius transformations.
$\mathcal{M}'_{i,j}$ is a configuration space of $i+3$ points in $(\P^1)^j$ and the array (\ref{Xdef}) can be viewed as
the coordinate of the following element
\begin{equation}
X'_{\rm can}=\left[\begin{array}{cccccc}
\infty&0&1&x_{11}&\cdots&x_{i1}\\
\vdots&\vdots&\vdots&\vdots&\ddots&\vdots\\
\infty&0&1&x_{1j}&\cdots&x_{ij}\\
\end{array}
\right] \in \mathcal{M}'_{i,j}.
\end{equation}
On the space $\mathcal{M}_{i,j}$, the action $w_m$ ($m \neq j$) has the following simple meaning.
$w_m$ ($m \in \{1,\ldots,j-1\}$): exchange of $m$-th and $(m+1)$-th row,
$w_0$: exchange of 1st and 2nd columns, and $w_{j+n}$ ($n \in \{1,\cdots,i+1\}$): exchange of $n$-th and $(n+1)$-th columns.
This picture $\mathcal{M}'_{i,j}$ will be useful for geometric interpretation of Propositions \ref{norm} and \ref{tail}.
\end{remark}

\begin{remark}
The actions (\ref{coble-actions}) give the affine Weyl group of type $\tilde{E}_8$ for $(i,j)=(5,2)$.
In the theory of Painlev\'e equations, however, we consider the case $(i,j)=(6,2)$, i.e. configuration of 10 points in $\P^2$. 
In the Painlev\'e context, since the 9 points $p_1, \ldots, p_9$ and the last one $p_{10}$ play different roles (parameters and unknown variables respectively), we omit the last generator $w_9$, then we have the affine Weyl group of type $\tilde E_8$ \cite{WE10}.
Note that the generic 9 points $p_1, \ldots, p_9$ can be parametrized as (\ref{esub}), but not for the last one. 
Since the last point $p_{10}$ is generically not on the cubic curve determined by the 9 points $p_1, \ldots, p_9$.
\end{remark}

\section{Birational group associated with quadrirational maps}\label{quadrirational}
This section defines and gives the basic features of a birational group that emerged as the generalisation of Yang-Baxter maps \cite{ABSf} and multi-quadratic quad-equtions \cite{AtkNie} given in \cite{ib}.
\begin{definition}\label{actions}
For positive integers $i$ and $j$, introduce actions on variables in the array
\begin{equation}
\left[\begin{array}{cccc}
x_{11} & x_{21} & \cdots & x_{i1}\\
x_{12} & x_{22} & \cdots & x_{i2}\\
\vdots & \vdots & & \vdots \\
x_{1j} & x_{2j} & \cdots & x_{ij}
\end{array}\right],\label{Xdef}
\end{equation}
as follows: 
\begin{equation}\label{rational_actions}
\begin{array}{rll}
a_m: & x_{(m-1)n} \leftrightarrow x_{mn}, & m \in \{2,\ldots,i\},\  n\in\{1,\ldots,j\},\\
b_n: & x_{m(n-1)} \leftrightarrow x_{mn}, & m \in \{1,\ldots,i\},\  n\in\{2,\ldots,j\},\\
\sigma: & x_{mn}\rightarrow q(x_{11},x_{mn},x_{m1},x_{1n}), \ & m\in\{2,\ldots,i\},\  n\in\{2,\ldots,j\},
\end{array}
\end{equation}
where trivial actions are omitted, and $q$ is a given rational expression (see below).
Also introduce two derived actions,
\begin{equation}\label{more_actions}
a_1:=\sigma b_2 \sigma , \quad b_1:=\sigma a_2 \sigma.
\end{equation}
\end{definition}

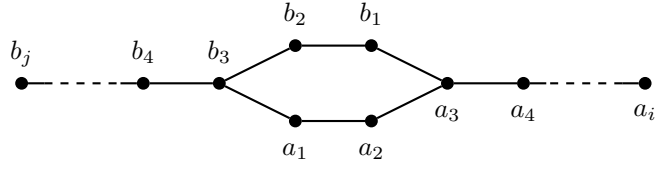
\begin{figure}[t]
\begin{center}
\begin{tikzpicture}[thick]
  \tikzstyle{every node}=[draw,circle,fill=black,minimum size=4pt,inner sep=0pt];
  \node (tt1) at (5,0.5) [label={[label distance=3pt]below:$a_1$}]{};
  \node (t1) at (5,1.5) [label={[label distance=3pt]above:$b_2$}]{};
  \node (t2) at (4,1) [label={[label distance=3pt]above:$b_3$}]{};
  \node (t3) at (3,1) [label={[label distance=3pt]above:$b_4$}]{};
  \node (t4) at (1.4,1) [label={[label distance=2pt]above:$b_j$}]{};
  \node (ss1) at (6,1.5) [label={[label distance=3pt]above:$b_1$}]{};
  \node (s1) at (6,0.5) [label={[label distance=3pt]below:$a_2$}]{};
  \node (s2) at (7,1) [label={[label distance=3pt]below:$a_3$}]{};
  \node (s3) at (8,1) [label={[label distance=3pt]below:$a_4$}]{};
  \node (s4) at (9.6,1) [label={[label distance=3pt]below:$a_i$}]{};
  \draw (t2)--(tt1)--(s1)--(s2)--(s3);
  \draw (s2)--(ss1)--(t1)--(t2)--(t3);
  \draw (t4)[solid]--(1.7,1);
  \draw (1.7,1)[dashed]--(2.7,1);
  \draw (2.7,1)[solid]--(t3);
  \draw (s4)[solid]--(9.3,1);
  \draw (9.3,1)[dashed]--(8.3,1);
  \draw (8.3,1)[solid]--(s3);
\end{tikzpicture}
\end{center}
\caption{
Coxeter-graph associated with the quadrirational maps. 
}
\label{cdd2}
\end{figure}

The actions of Definition \ref{actions} are to be associated with the relations encoded in the Coxeter graph of Figure \ref{cdd2}, 
where relations involving $\sigma$ correspond to the diagram automorphism extending (\ref{more_actions}): $\sigma^2=(\sigma a_{m})^2=(\sigma b_{n})^2=1$ for $m,n>2$.
The encoded relations are satisfied, when the rational expression $q$ appearing in (\ref{rational_actions}), is one of the following:
\begin{eqnarray}
{\textrm F_{I}:} && q(w,x,y,z)=
yz\frac{(w-1)(x-1)-(y-1)(z-1)}{yz(w-1)(x-1)-wx(y-1)(z-1)},\label{F1q}\\
{\textrm F_{II}:} && q(w,x,y,z)=yz\frac{w+x-y-z}{wx-yz},\label{F2q}\\
{\textrm F_{IV}:} && q(w,x,y,z)=y+z-\frac{wx-yz}{w+x-y-z}.\label{F4q}
\end{eqnarray}
These expressions are related, respectively, to the systems,
\begin{eqnarray}
&x_1x_2x_3=y_1y_2y_3,\ (1-x_1)(1-x_2)(1-x_3) = (1-y_1)(1-y_2)(1-y_3), \label{F1a}\\
&x_1+x_2+x_3=y_1+y_2+y_3,\ x_1x_2x_3 = y_1y_2y_3, \label{F2a}\\
&x_1+x_2+x_3=y_1+y_2+y_3,\ x_1^2+x_2^2+x_3^2 = y_1^2+y_2^2+y_3^2. \label{F4a}
\end{eqnarray} 
Specifically, each of the three systems can be written differently in terms of the corresponding expression $q$, as
\begin{equation}\label{map}
x_3=q(x_1,x_2,y_1,y_2), \ y_3=q(y_1,y_2,x_1,x_2).
\end{equation}
In turn, (\ref{F1a}), (\ref{F2a}) and (\ref{F4a}) can be characterised more invariantly, as the condition for the linear denendence of the three polynomials
\begin{equation}\label{polys}
(x-x_1)(x-x_2)(x-x_3), \ 
(x-y_1)(x-y_2)(x-y_3), \ 
P(x),
\end{equation}
where, respectively, $P(x)=x(1-x)$, $P(x)=x$ and $P(x)=1$.
For $q$ to be uniquely obtained and the subsequent group relations to hold, $P(x)$ can be any non-zero polynomial of degree three or less.
M\"obius changes of variables allow to fix coefficients of $P(x)$ without losing generality, but will not change its number of roots, so three canonical forms are required.

In the simplest non-trivial case of the group, when $i=j=2$, the relations are not obvious from the definition, but encode the invariance of system (\ref{map}) under permutations of the variables. 
The permutation symmetry is clear from the expressions (\ref{F1a}), (\ref{F2a}) and (\ref{F4a}), or from the characterisation in terms of linearly dependent polynomials (\ref{polys}).
For instance, that $\sigma^2=1$ follows from invariance of (\ref{map}) under the permutation $x_2\leftrightarrow x_3$.
In general, the group relations are satisfied for any $i,j$ as a consequence of the instance $i=2$, $j=3$, which can therefore be taken as the underlying consistency property.
It can be encoded in a mnemonic associated with a 5-simplex by associating variables with edges, system (\ref{map}) to pairs of opposing faces, and initial data to all edges along a Hamiltonian cycle \cite{ib}.

The terminology of $F_I$, $F_{II}$ and $F_{IV}$ in (\ref{F1q}), (\ref{F2q}) and (\ref{F4q}) is a result of the origin of these systems in the Alder-Bobenko-Suris classification of quadrirational maps \cite{ABSf} which is a different characterisation to the one above.
A system of two polynomial equations in four variables labelled by edges of a quad, is called quadrirational, if it determines the variables adjacent to any vertex rationally from the remaining two. 
System (\ref{map}) satisfies this definition for any fixed choice of $x_3$ and $y_3$, by associating $x_1$ with a quad edge opposite to $x_2$ and $y_1$ with an edge opposite to $y_2$.
The integrability is related to consistent extension of this construction from the quad, to a hypercube. 
Though not originally formulated as a birational group, it corresponds to Definition \ref{actions} with $i=2$ restricted to the subset of generators $\sigma,b_2,\ldots,b_j$.
The underlying consistency property for the quadrirational maps, associated with the cube, has the geometric interpretation as an incidence theorem related to pencils of conics in $\mathbb{P}^2$.

\begin{remark}
Up to natural transformations, there are two remaining quadrirational maps $F_{III}$ and $F_{V}$ found in \cite{ABSf}, which can be obtained locally as limiting cases from the ones listed above.
Those systems can be formulated as a birational Coxeter group associated with symmetries of the hypercube, however, the limiting procedure breaks the symmetry between variables and parameters of the map, and is incompatible with extension to the more general Coxeter graph.
\end{remark}

\section{Relation between the groups}\label{twogroups}
The birational group associated with the quadrirational maps that has been described in the previous section, is connected with Coble's group in two ways. 
\begin{prop}\label{norm}
Consider Definition \ref{actions} with $i>3$, $q$ as in (\ref{F1q}), (\ref{F2q}) or (\ref{F4q}), and introduce new variables
\begin{equation}\label{zdef}
z_{mn} = \frac{(x_{(m+3)n}-x_{2n})(x_{1n}-x_{3n})}{(x_{(m+3)n}-x_{1n})(x_{2n}-x_{3n})},
\end{equation}
where $m\in\{1,\ldots,i-3\}$ and $n\in\{1,\ldots,j\}$.
The actions $a_2,\ldots,a_{i},b_1,\ldots,b_{j}$ on $x_{11},\ldots,x_{ij}$, induce the following actions on the new variables:
\begin{equation} \label{induced}
\begin{array}{rll}
a_2:&z_{mn} \rightarrow 1/z_{mn}, & m\in\{1,\ldots,i-3\}, \ n\in\{1,\ldots,j\}, \\
a_3:&z_{mn} \rightarrow 1-z_{mn}, & m\in\{1,\ldots,i-3\}, \ n\in\{1,\ldots,j\}, \\
b_1:&z_{m1} \rightarrow 1/z_{m1}, z_{mn} \rightarrow z_{mn}/z_{m1}, & m\in\{1,\ldots,i-3\}, \ n\in\{2,\ldots,j\},\\
b_n:&z_{m(n-1)} \leftrightarrow z_{mn}, & m\in\{1,\ldots,i-3\}, \ n\in\{2,\ldots,j\},\\
a_4:&z_{1n} \rightarrow 1/z_{1n}, z_{mn} \rightarrow z_{mn}/z_{1n}, & m\in\{2,\ldots,i-3\}, \ n\in\{1,\ldots,j\},\\
a_{m+3}:&z_{(m-1)n} \leftrightarrow z_{mn}, & m\in\{2,\ldots,i-3\}, \ n\in\{1,\ldots,j\}.\\
\end{array}
\end{equation}
\end{prop}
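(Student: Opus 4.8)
The plan is to verify (\ref{induced}) one generator at a time, having first unpacked the geometric content of (\ref{zdef}). For each fixed $n$, the quantity $z_{mn}$ is the value at $x_{(m+3)n}$ of the unique M\"obius transformation $\phi_n$ of the $n$-th coordinate line that sends $(x_{1n},x_{2n},x_{3n})$ to $(\infty,0,1)$. Equivalently, reading the array (\ref{Xdef}) column by column as a configuration of $i$ points in $(\P^1)^j$, the matrix $(z_{mn})$ records the coordinates of the last $i-3$ points once the first three have been placed at $\infty,0,1$ in every factor; this is exactly the picture $\mathcal{M}'_{i-3,j}$ of Remark \ref{normal}, which is why Coble's actions must reappear.

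With this dictionary the generators $b_n$ ($n\geq 2$), $a_{m+3}$ ($m\geq 2$), $a_2$, $a_3$ and $a_4$ are immediate. The first two are combinatorial: $b_n$ exchanges the $(n-1)$-th and $n$-th coordinate lines while (\ref{zdef}) is uniform in the column index, giving $z_{m(n-1)}\leftrightarrow z_{mn}$; and $a_{m+3}$ exchanges points $m+2$ and $m+3$, which leaves the three normalising points in place, giving $z_{(m-1)n}\leftrightarrow z_{mn}$. For the remaining three one simply records how $\phi_n$ changes: $a_2$ interchanges points $1$ and $2$, so $\phi_n$ picks up a post-composition with $(\infty,0,1)\mapsto(0,\infty,1)$, i.e. $t\mapsto 1/t$, whence $z_{mn}\mapsto 1/z_{mn}$; $a_3$ interchanges points $2$ and $3$, post-composing with $t\mapsto 1-t$, whence $z_{mn}\mapsto 1-z_{mn}$; and $a_4$ interchanges point $3$ (which sits at $1$ in the $n$-th factor) with point $4$ (which sits at $z_{1n}$), so re-normalisation post-composes with $t\mapsto t/z_{1n}$, whence $z_{1n}\mapsto 1/z_{1n}$ and $z_{mn}\mapsto z_{mn}/z_{1n}$ for $m\geq 2$. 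Each of these reduces to a one-line cross-ratio identity on (\ref{zdef}); together they account for every line of (\ref{induced}) except the one for $b_1$.

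The generator $b_1=\sigma a_2\sigma$ is the crux. On the first column $\sigma$ acts as the identity, so there $b_1$ simply transposes $x_{11}\leftrightarrow x_{21}$, and the cross-ratio identity already used for $a_2$ gives $z_{m1}\mapsto 1/z_{m1}$ as required. For $n\geq 2$ the situation is genuinely different, because the first $\sigma$ replaces $x_{kn}$ by $q(x_{11},x_{kn},x_{k1},x_{1n})$, a M\"obius transformation of the $n$-th line whose coefficients vary with the row (through the first-column entry $x_{k1}$); thus $\sigma$ is not visibly compatible with the cross-ratio coordinates $z_{mn}$, and what has to be shown is that this row-dependence cancels in the conjugate $\sigma a_2\sigma$, leaving the clean ``column-normalisation'' $z_{mn}\mapsto z_{mn}/z_{m1}$ — in particular that the resulting rational function of the $x$'s depends only on the $z$'s. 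I would prove this by direct substitution: apply $b_1=\sigma a_2\sigma$ to the four entries $x_{1n},x_{2n},x_{3n},x_{(m+3)n}$ using the explicit $q$, feed the result into (\ref{zdef}), and simplify. The three cases (\ref{F1q})--(\ref{F4q}) can be treated in parallel by working from the characterisation of $q$ via linear dependence of the cubics (\ref{polys}) rather than the three separate closed forms. This verification is where essentially all of the labour of the proposition lies; once the formula for $b_1$ is established, closure of the induced system and its coincidence with Coble's actions for parameters $(i-3,j)$ are read off by matching Figures \ref{cdd3} and \ref{cdd2}. (A more structural route --- checking that the induced $b_1$ is a birational involution whose commutation and braid relations with the other induced generators, per Figure \ref{cdd2}, force it to be Coble's generator $w_j$ --- is available, but requires a uniqueness argument no shorter than the direct computation.)
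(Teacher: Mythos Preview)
Your proposal is correct and follows essentially the same route as the paper: the same M\"obius/cross-ratio interpretation of (\ref{zdef}) handles $a_2,\ldots,a_i,b_2,\ldots,b_j$ by inspection, and the only real work is the direct computation for $b_1=\sigma a_2\sigma$. The paper adds one simplifying observation you might adopt: since the induced action on $z_{mn}$ under $b_1$ involves only columns $1,2,3,m+3$ and rows $1,n$ of (\ref{Xdef}), one may assume $i=5$, $j=2$ without loss of generality before carrying out the substitution.
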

These are the Coble actions in Definition \ref{coble} with the re-labelling $x_{mn}\rightarrow z_{mn}$, $i\rightarrow i-3$, $w_0\rightarrow a_2$, $w_{j+1}\rightarrow a_3$, $w_j\rightarrow b_1$, $w_{j+1-n}\rightarrow b_n$ ($n>1$), $w_{j+2}\rightarrow a_4$, $w_{j+1+m}\rightarrow a_{m+3}$ ($m>1$).
There is no action induced by $\sigma$ or $a_1$, so Coble's group is realised as a subgroup of the original one.
Notice also the induced actions are independent of the choice made for $q$.

The substitution (\ref{zdef}) corresponds to the M\"obius change of variables on each row of the array (\ref{Xdef}) that sends the first three entries to $\infty$, $0$ and $1$, respectively, which is the projective normalisation admissible when considering (\ref{Xdef}) as a set of $i$ points in $(\mathbb{P}^1)^{j}$, in the way described in Section \ref{Painleve} (Remark \ref{normal}).
The actions induced by the row and column permutations $a_2,\ldots,a_i,b_2,\ldots,b_j$, are verified by inspection.
In the Painlev\'e theory the action of $b_1$ on $x_{mn}$ has been introduced on $z_{mn}$ directly, or on $x_{mn}$ through the associated linear actions that were described in Section \ref{Painleve}; the precise connection to the above Proposition is given in Section \ref{solution}.
To confirm the induced action of $b_1$, it can be assumed that $i=5$ and $j=2$ without losing generality, so that verification is through direct calculation.

The second connection between the groups is a decomposition of actions from Definition \ref{actions} in terms of Coble's actions (\ref{coble-actions}).
It applies only in the primary case $F_I$.
\begin{prop}\label{decomp}
In the case of $F_I$ (\ref{F1q}), the actions of Definition \ref{actions} decompose in terms of the Coble actions (Definition \ref{coble}):
\begin{equation}\label{abstr}
\begin{array}{ll}
a_m=w_{j+1+m}, \quad & m \in \{2,\ldots,i\},\\
b_n=w_{j+1-n}, \quad & n \in \{2,\ldots,j\},\\
\sigma =(w_0 w_{j+2} w_{j+1})^2w_jw_{j+1}w_0w_{j+2}w_{j+1}w_j.
\end{array}
\end{equation}
\end{prop}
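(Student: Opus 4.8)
The plan is as follows. The first two families of relations in (\ref{abstr}) require no computation: comparing (\ref{rational_actions}) with (\ref{coble-actions}) one sees that, for each $m\in\{2,\ldots,i\}$, the maps $a_m$ and $w_{j+1+m}$ are literally the same transposition of a pair of columns of the array (\ref{Xdef}), and likewise $b_n$ and $w_{j+1-n}$ are the same transposition of a pair of rows for each $n\in\{2,\ldots,j\}$.

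The substance is the third relation, which I abbreviate $\sigma=W$ with $W:=(w_0 w_{j+2}w_{j+1})^2 w_j w_{j+1}w_0 w_{j+2}w_{j+1}w_j$ and $q$ taken to be $F_I$ of (\ref{F1q}). The first observation is that only $w_0,w_j,w_{j+1},w_{j+2}$ enter $W$, and inspection of (\ref{coble-actions}) shows that each of these four generators respects the partition of the array entries into the corner $x_{11}$, the rest of the first column $\{x_{m1}:m\ge2\}$, the rest of the first row $\{x_{1n}:n\ge2\}$, and the interior $\{x_{mn}:m,n\ge2\}$; moreover each sends $x_{11}$ to a function of $x_{11}$, each $x_{m1}$ to a function of $(x_{m1},x_{11})$, each $x_{1n}$ to a function of $(x_{1n},x_{11})$, and each $x_{mn}$ to a function of $(x_{mn},x_{m1},x_{1n},x_{11})$. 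This triangular dependence is inherited by any word in these generators, in particular by $W$, so the action of $W$ on the whole array is determined by its action on the four entries $x_{11},x_{m1},x_{1n},x_{mn}$ for one fixed pair $m,n\ge2$; and these restricted actions are given by exactly the same formulas as in the case $i=j=2$. Hence it suffices to prove $\sigma=W$ for $i=j=2$, i.e.\ that $(w_0 w_4 w_3)^2 w_2 w_3 w_0 w_4 w_3 w_2$ fixes $x_{11},x_{12},x_{21}$ and sends $x_{22}$ to $q(x_{11},x_{22},x_{21},x_{12})$ with $q$ as in (\ref{F1q}), which is precisely the action of $\sigma$ from (\ref{rational_actions}) under the relabelling $x_{21}\mapsto x_{m1}$, $x_{12}\mapsto x_{1n}$, $x_{22}\mapsto x_{mn}$.

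For the $i=j=2$ verification I would compose the explicit maps (\ref{coble-actions}) for $w_0,w_2,w_3,w_4$ one group of variables at a time. The orbit of $x_{11}$ is driven only by $x\mapsto1/x$ (from $w_0,w_2,w_4$) and $x\mapsto1-x$ (from $w_3$), so it stays in the anharmonic group, and a quick check using the relations $\iota^2=\rho^2=1$ there shows the word $W$ acts as the identity on $x_{11}$; thus $x_{11}$ is fixed and its intermediate values along the word are known explicitly. Granting this, the orbit of $x_{21}$ is a composition of M\"obius maps in the single variable $x_{21}$ whose coefficients are the recorded iterates of $x_{11}$, and it closes up to the identity; the orbit of $x_{12}$ is computed the same way with the roles of $w_j$ and $w_{j+2}$ interchanged (the row--column transpose sends $w_j\leftrightarrow w_{j+2}$ and fixes $w_0,w_{j+1}$), so it too is the identity. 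Finally the orbit of $x_{22}$ is a composition of M\"obius maps in $x_{22}$ whose coefficients are the now-explicit iterates of $x_{11},x_{21},x_{12}$; carrying this composition out and simplifying should produce the expression (\ref{F1q}). I expect this last simplification to be the only genuinely laborious step and the main obstacle; the triangular reduction to $i=j=2$ and the layered treatment of the four orbits are what turn it into a bounded calculation. It is worth noting in passing why only $F_I$ appears: the Coble generators $w_0,w_{j+1}$ are fixed to act by $x\mapsto1/x$ and $x\mapsto1-x$, which matches the distinguished cubic $P(x)=x(1-x)$ of $F_I$ in the characterisation via (\ref{polys}), whereas $F_{II}$ and $F_{IV}$ would require the inversion-type generator to be adapted to $P(x)=x$ or $P(x)=1$.
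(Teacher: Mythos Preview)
Your proposal is correct and follows essentially the same route as the paper: reduce the verification of the $\sigma$ identity to the case $i=j=2$ and check it there by direct composition of the Coble generators. Your explicit justification of that reduction via the triangular dependence of $w_0,w_j,w_{j+1},w_{j+2}$ on the four-variable blocks $(x_{11},x_{m1},x_{1n},x_{mn})$, and your layered scheme for the computation, simply spell out what the paper leaves as ``without losing generality'' and ``by calculation''; your row--column transpose remark is exactly the symmetry behind the paper's alternative expression obtained by swapping $w_j$ and $w_{j+2}$.
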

\begin{proof}
The decomposition for $\sigma$ can be verified by calculation in the particular case $i=j=2$ without losing generality.
Note an alternative expression $\sigma =(w_0 w_{j} w_{j+1})^2w_{j+2}w_{j+1}w_0w_{j}w_{j+1}w_{j+2}$ obtained from (\ref{abstr}) by interchanging $w_j$ and $w_{j+2}$.
It coincides with the original expression due to the group relations encoded in Figure \ref{cdd3}.
\end{proof}
\begin{remark}
The relations established in Propositions \ref{norm} and \ref{decomp} are between birational realisations of Coxeter groups, but now restrict attention to only the abstract groups.
Proposition \ref{norm} simply corresponds to realisation of the Coxeter group associated with Figure \ref{cdd3} as a parabolic subgroup of the one associated with Figure \ref{cdd2}.
On the other hand, Proposition \ref{decomp} corresponds to a procedure for obtaining one Coxeter group from another that relies on a decomposition that exists for the {\it centralizer} of a parabolic subgroup $W$ \cite{Nui} (see also \cite{Bri,BH,Allcock}).
This decomposition is the product of three factors: the centre of $W$, a Coxeter group whose associated graph can be obtained algorithmically by modifying the original graph, and a group that corresponds to automorphisms for the modified graph.
In the abstract group $\langle w_0,\ldots,w_{i+j+1}\rangle$ defined by its presentation in terms of relations encoded in Figure \ref{cdd3}, the centralizer of $\langle w_0,w_{j+1}\rangle$ is the semidirect product $\langle a_1,\ldots,a_i,b_1,\ldots,b_j\rangle\rtimes\langle\sigma\rangle$, in terms of elements (\ref{abstr}) and $a_1$, $b_1$ given in terms of elements (\ref{abstr}) by (\ref{more_actions}).
There is no first factor here, because $\langle w_0,w_{j+1}\rangle$ has trivial centre.

It follows from the characterisation in Section \ref{quadrirational} that any M\"obius change of variables that permutes the roots (that share multiplicity) of polynomial $P(x)$ in (\ref{polys}), commutes with the corresponding action $\sigma$, and therefore the whole group of Definition \ref{actions}.
For polynomial $P(x)=x(1-x)$ ($F_{I}$) the group of such transformations is exactly $\langle w_0,w_{j+1}\rangle$ from the Coble actions.
(For $P(x)=x$ ($F_{II}$) it corresponds to the M\"obius subgroup $\{x\mapsto \alpha x: \alpha\neq 0\}$ and for $P(x)=1$ (F$_{IV}$) the M\"obius subgroup that fixes $\infty$, $\{x\mapsto \alpha x + \beta: \alpha\neq 0 \}$).
\end{remark}

\section{Solutions}\label{solution}
In this section linear actions compatible with the birational group associated with the quadrirational maps (Section \ref{quadrirational}) are given.
The corollary from results in Sections \ref{Painleve} and \ref{twogroups} is the solution when $q$ in (\ref{rational_actions}) is $F_{I}$ (\ref{F1q}).
The remaining cases have been obtained by a limiting procedure, and a direct proof is therefore given.
\subsection{Preliminaries}
As in Section \ref{Painleve}, solutions are in terms of the function $[u]$ which is either the Weierstrass sigma function
$[u]=\sigma(u)$ (elliptic case), $[u]={\rm sinh}(u)$ (trigonometric case) or  $[u]=u$ (rational case).
In this section use will also be made of associated functions defined in terms of $[u]$ by the equations
\begin{equation}
\zeta(u) = \frac{[u]'}{[u]}, \quad \wp(u) = \frac{([u]')^2-[u][u]''}{[u]^2}.
\end{equation}
The notation used for these functions corresponds to the elliptic case, but trigonometric and rational cases are included by replacements as follows:
\begin{equation}
\begin{array}{cclcl}
\hline
{\rm ell}&\rightarrow&{\rm trig}&\rightarrow&{\rm rat}\\
\hline
\sigma(u) &\rightarrow& {\mathrm{sinh}}(u) &\rightarrow& u \\
\zeta(u) &\rightarrow& {\mathrm{cosh}}(u)/{\mathrm{sinh}}(u) &\rightarrow& 1/u\\
\wp(u) &\rightarrow& 1/{\mathrm{sinh^2}}(u) &\rightarrow& 1/u^2\\
\hline
\end{array}
\end{equation}
\begin{remark}
The limiting procedure in the chain $F_{I} \rightarrow F_{II} \rightarrow F_{IV}$ (see \cite{ABSf}, or one can think of coalesence of roots of $P(x)$ in (\ref{polys})) does not correspond to the limiting procedure elliptic $\rightarrow$ trigonometric $\rightarrow$ rational. 
Rational, trigonometric and elliptic solutions are given for the birational group corresponding to each case $F_{I}$, $F_{II}$ and $F_{IV}$.
\end{remark}

\subsection{Linear actions compatible with $F_{I}$, $F_{II}$ and $F_{IV}$}
The result corresponding to Proposition \ref{ee}, but for the birational group defined in Section \ref{quadrirational}, is as follows.
\begin{prop}\label{q-solution}
On the variables
\begin{equation}\label{khvars}
k_1,\ldots,k_i,\  h_1,\ldots,h_j,
\end{equation}
the linear actions
\begin{equation}\label{la}
\begin{array}{rll}
a_{m}:& k_{m-1}\leftrightarrow k_m, & m\in\{2,\ldots,i\},\\
b_{n}:& h_{n-1}\leftrightarrow h_n, & n\in\{2,\ldots,j\},\\
\sigma: & k_{m}\rightarrow l_1+l_2-h_1-k_m, & m\in\{2,\ldots,i\},\\
        & h_{n}\rightarrow l_1+l_2-k_1-h_n, & n\in\{2,\ldots,j\},
\end{array}
\end{equation}
are compatible with the corresponding birational actions of Definition \ref{actions}, in the case of F$_{I}$ (\ref{F1q}) via the substitution
\begin{equation}\label{F1subs}
x_{mn}=\frac{[k_m+h_n-l_1][l_1][k_m-l_2][h_n-l_2]}{[k_m+h_n-l_2][l_2][k_m-l_1][h_n-l_1]},
\end{equation}
where $l_1\neq l_2$ are free constants,
in the case of F$_{II}$ (\ref{F2q}) via the choice $l_2=l_1\neq 0$ in (\ref{la}) and the substitution
\begin{equation}\label{F2subs}
x_{mn}=\zeta(k_m+h_n-l_1)-\zeta(k_m-l_1)-\zeta(h_n-l_1)-\zeta(l_1),
\end{equation}
and in the case of F$_{IV}$ (\ref{F4q}) via the choice $l_2=l_1=0$ in (\ref{la}) and the substitution
\begin{equation}\label{F4subs}
x_{mn}=\zeta(k_m+h_n)-\zeta(k_m)-\zeta(h_n).
\end{equation}
\end{prop}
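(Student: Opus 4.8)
The plan is to reduce the whole statement to the single generator $\sigma$, then to obtain the case $F_{I}$ from the results of Section~\ref{twogroups} and to treat $F_{II}$, $F_{IV}$ by direct computation. In every case the generators $a_m$ ($m\geq 2$), $b_n$ ($n\geq 2$) act on the array~(\ref{Xdef}) by exchanging rows, resp.\ columns, and each substitution~(\ref{F1subs}),~(\ref{F2subs}),~(\ref{F4subs}) is manifestly symmetric under the corresponding transpositions $k_{m-1}\leftrightarrow k_m$, $h_{n-1}\leftrightarrow h_n$, so those compatibilities are immediate; once $\sigma$, $a_2$, $b_2$ are settled the derived generators $a_1=\sigma b_2\sigma$, $b_1=\sigma a_2\sigma$ of~(\ref{more_actions}) follow automatically. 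Since the birational $\sigma$, and its linear counterpart in~(\ref{la}), involve only the four entries $x_{11},x_{m1},x_{1n},x_{mn}$, it suffices to check $\sigma$-compatibility with $i=j=2$.

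For the case $F_{I}$, the statement is a corollary of Propositions~\ref{ee} and~\ref{decomp}. By Proposition~\ref{decomp} the birational actions $a_m,b_n,\sigma$ on the $x_{mn}$ are words in the Coble generators~(\ref{coble-actions}), hence by Proposition~\ref{ee} they are compatible, through~(\ref{esub}), with the corresponding words in the linear action~(\ref{w-linear}). I would then set
\[
h_n=\epsilon_{j+1-n}-\epsilon_{j+1},\qquad k_m=\epsilon_{j+2}-\epsilon_{j+m+2},\qquad l_2=\epsilon_{j+2}-\epsilon_{j+1},\qquad l_1=l_2-\alpha_0,
\]
observe that each of the eight factors of~(\ref{esub}) depends on $\epsilon$ only through these combinations, and check that the two sign factors (coming from $[-l_1]$ and $[-l_2]$) cancel between numerator and denominator so that~(\ref{esub}) becomes exactly~(\ref{F1subs}); that the words $w_{j+1+m}$, $w_{j+1-n}$ induce the transpositions in~(\ref{la}); and that the word $(w_0w_{j+2}w_{j+1})^2w_jw_{j+1}w_0w_{j+2}w_{j+1}w_j$ representing $\sigma$ induces the affine involution of~(\ref{la}), up to the symmetry $k_m\mapsto -k_m$, $h_n\mapsto -h_n$, $l_1\mapsto-l_1$, $l_2\mapsto-l_2$ of~(\ref{F1subs}), which changes no $x_{mn}$. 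By Proposition~\ref{decomp} the last computation may be carried out at $i=j=2$, where it is a short calculation.

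For the cases $F_{II}$ and $F_{IV}$ the Coble decomposition is unavailable and the substitutions arise from a singular limit that does not transport compatibility, so a direct proof is needed. Take $i=j=2$ and write $x(k,h)$ for the right-hand side of~(\ref{F2subs}) (resp.~(\ref{F4subs})), with $l_2=l_1$ (resp.\ $l_1=l_2=0$); the linear $\sigma$ of~(\ref{la}) then sends $k_2\mapsto l_1+l_2-h_1-k_2$, $h_2\mapsto l_1+l_2-k_1-h_2$ and fixes $k_1,h_1,l_1,l_2$. I would verify: (i) that $x(l_1+l_2-h_1-k,\,h_1)=x(k,h_1)$, and symmetrically in the first variable, so that $\sigma$ leaves $x_{11},x_{21},x_{12}$ fixed — a short check using $[-u]=-[u]$; and (ii) the identity
\[
x\bigl(l_1+l_2-h_1-k_2,\ l_1+l_2-k_1-h_2\bigr)=q\bigl(x(k_1,h_1),\,x(k_2,h_2),\,x(k_2,h_1),\,x(k_1,h_2)\bigr)
\]
with $q$ as in~(\ref{F2q}) (resp.~(\ref{F4q})). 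For $F_{IV}$ the addition theorem $\zeta(u)+\zeta(v)-\zeta(u+v)=\tfrac12\bigl(\wp'(u)-\wp'(v)\bigr)/\bigl(\wp(u)-\wp(v)\bigr)$ rewrites~(\ref{F4subs}) as $x(k,h)=\tfrac12\bigl(\wp'(k)-\wp'(h)\bigr)/\bigl(\wp(k)-\wp(h)\bigr)$, the slope of the chord of the cubic through the points with parameters $k$ and $h$; since the $\sigma$-shift $k_2\mapsto-k_2-h_1$ produces (up to sign) the parameter of the third intersection of the chord through $k_2$ and $h_1$, and likewise $h_2\mapsto-h_2-k_1$, identity~(ii) becomes a chord-incidence identity on the Weierstrass cubic provable from the group law, the homogeneity of $q$ absorbing the factor $\tfrac12$; $F_{II}$ is handled similarly, and by the replacement table all these identities persist with $[u]=\sinh u$ or $[u]=u$.

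I expect the main obstacle to be identity~(ii): it is a genuine functional identity for the Weierstrass $\zeta$ and its two degenerations, and the geometric reformulation on the Weierstrass cubic — together with the characterisation of~(\ref{F2a}),~(\ref{F4a}) by linear dependence of the cubics~(\ref{polys}) — is what should keep the calculation manageable. The case $F_{I}$ presents no analytic difficulty; the only real work there will be the bookkeeping that translates the Coble data $\epsilon$ into $(k,h,l_1,l_2)$ and confirms that the long word representing $\sigma$ in~(\ref{abstr}) acts on it as stated in~(\ref{la}).
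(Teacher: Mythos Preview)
Your plan is sound, but the paper's proof takes a different and more uniform route. You handle $F_I$ via the Coble decomposition (Propositions~\ref{ee} and~\ref{decomp}) together with the change of variables to $(k,h,l_1,l_2)$; this is valid, and the paper in fact records exactly this connection \emph{after} the proof, in Proposition~\ref{symact} and the discussion around~(\ref{alternative}). The proof of Proposition~\ref{q-solution} itself, however, is a direct verification that treats all three cases $F_I$, $F_{II}$, $F_{IV}$ on the same footing, without invoking Section~\ref{twogroups} at all.

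The device the paper uses, and which you do not, is to adjoin to the $\sigma$-compatibility~(\ref{fq}) its $h_1\leftrightarrow h_n$ image~(\ref{fq2}) and then pass, via~(\ref{map}), to the equivalent symmetric systems~(\ref{F1a}),~(\ref{F2a}),~(\ref{F4a}) with the identification~(\ref{allsubs}). Because the two triples in~(\ref{allsubs}) are interchanged by $h_1\leftrightarrow h_n$, the \emph{first} equation of each system holds identically, with no hypothesis on $[\cdot]$ or $\zeta$; the \emph{second} reduces, after applying one classical identity termwise (the Riemann relation for $F_I$, the formula $\zeta(u)+\zeta(v)+\zeta(w)-\zeta(u{+}v{+}w)=[u{+}v][v{+}w][w{+}u]/([u][v][w][u{+}v{+}w])$ for $F_{II}$, and $(\zeta(u{+}v)-\zeta(u)-\zeta(v))^2=\wp(u{+}v)+\wp(u)+\wp(v)$ for $F_{IV}$), to an expression again symmetric under $h_1\leftrightarrow h_n$. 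Your chord-slope reformulation for $F_{IV}$ is correct but comparatively indirect, and your $F_{II}$ (``handled similarly'') is precisely where the paper's trick earns its keep: it turns the four-variable functional identity you flag as the main obstacle into a one-line application of the $\zeta$-sum formula. The trade-off is that your $F_I$ argument is logically cheaper once Sections~\ref{Painleve} and~\ref{twogroups} are in hand, whereas the paper's argument is self-contained and uniform across the three cases.
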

\begin{proof}
This can be verified by direct calculation based on Definition \ref{actions}.
To reduce the global compatibility to the local one, it can be confirmed that the linear actions (\ref{la}) give a representation for the group whose relations are encoded in Figure \ref{cdd2}. 

Notice that the substitutions (\ref{F1subs}), (\ref{F2subs}) and (\ref{F4subs}) are of the general form $x_{mn}=F(k_m,h_n)$ for some function symmetric in its two arguments, $F$.
The compatibility for actions $a_2,\ldots,a_i,b_2,\ldots,b_j$ is immediate from the general form of the substitution. 
Compatibility for the trivial actions of $\sigma$,
for instance in the first row of array (\ref{Xdef}), rely on the equality,
\begin{equation}\label{ta}
F(l_1+l_2-h_1-k_m,h_1)=F(k_m,h_1),
\end{equation}
which, because $[-u]=-[u]$ and $\zeta(-u)=-\zeta(u)$, can be verified by inspection.
The trivial actions of $\sigma$ on the first column of (\ref{Xdef}) then follow by symmetry of $F$.
Bear in mind that (\ref{ta}) as written holds only for substitution $F$ in (\ref{F1subs}), the subsequent cases (\ref{F2subs}) and (\ref{F4subs}) require the assumption, as stated in the Proposition, that $l_2=l_1$, and that $l_2=l_1=0$, respectively.

The main calculation, in each of the three cases, is thus reduced to verification of compatibility for the non-trivial action of $\sigma$: the equality
\begin{multline}\label{fq}
F(l_1+l_2-h_1-k_m,l_1+l_2-k_1-h_n) = \\ q(F(k_1,h_1),F(k_m,h_n),F(k_1,h_n),F(k_m,h_1)),
\end{multline}
$m,n>1$, and where $q$ is the corresponding expression (\ref{F1q}), (\ref{F2q}) or (\ref{F4q}).
The calculation to verify that (\ref{fq}) holds identically in $k_1,h_1,k_m,h_n$ is straightforward, but relies on functional identities in the elliptic case.
This is simplified by exploiting the equivalence between (\ref{map}) and systems (\ref{F1a}), (\ref{F2a}) or (\ref{F4a}).
Complement (\ref{fq}) with the equivalent condition obtained by the interchange $h_n\leftrightarrow h_1$,
\begin{multline}\label{fq2}
F(l_1+l_2-h_n-k_m,l_1+l_2-k_1-h_1) = \\ q(F(k_1,h_n),F(k_m,h_1),F(k_1,h_1),F(k_m,h_n)).
\end{multline}
By comparing (\ref{fq}) and (\ref{fq2}) with (\ref{map}), the variables $\{x_1,x_2,x_3\}$ and $\{y_1,y_2,y_3\}$ appearing in (\ref{map}) can be identified with
\begin{equation}\label{allsubs}
\begin{split}
\{F(k_1,h_1),
F(k_m,h_n),
F(l_1+l_2-h_1-k_m,l_1+l_2-k_1-h_n)\},\\
 {\textrm{and }}
\{F(k_1,h_n),
F(k_m,h_1),
F(l_1+l_2-h_n-k_m,l_1+l_2-k_1-h_1)\}.
\end{split}
\end{equation}
The advantage of this reformulation, is that when expressions (\ref{allsubs}) are substituted into (\ref{F1a}), (\ref{F2a}) or (\ref{F4a}), the first equation is satisfied identically, i.e., without any condition on the function $u\mapsto [u]$ or $u\mapsto\zeta(u)$.
In fact the second equation is also satisfied identically, after applying a certain elliptic function identity to each term,  
\begin{equation}
\frac{[u+w][u-w][x+v][x-v]}{[u+x][u-x][w+v][w-v]}=
1-\frac{[u+v][u-v][w+x][w-x]}{[u+x][u-x][w+v][w-v]},
\end{equation}
which is the Riemann relation written in a suitable way,
\begin{equation}
\zeta(u)+\zeta(v)+\zeta(w)-\zeta(u+v+w) = \frac{[u+v][v+w][w+u]}{[u][v][w][u+v+w]},
\end{equation}
and
\begin{equation}
\left(\zeta(u+v)-\zeta(u)-\zeta(v)\right)^2 = \wp(u+v)+\wp(u)+\wp(v),
\end{equation}
respectively.
\end{proof}

\subsection{Connection between the linear representations}
Here, some details are given for calculations that relate the linear actions associated with the two groups.

Combining Propositions \ref{ee} and \ref{decomp} yields linear actions compatible with case $F_I$ of Definition \ref{actions}.
To make this contact and relate it to Proposition \ref{q-solution}, relies on a linear change of variables connecting the substitutions (\ref{esub}) and (\ref{F1subs}).
\begin{prop}[a reformulation of Proposition \ref{ee}]\label{symact}
The association
\begin{equation}
\begin{array}{ll}
l_1=\epsilon_{j+2}-\epsilon_{j+1}-\alpha_0, \ l_2=\epsilon_{j+2}-\epsilon_{j+1}, \\
k_m=\epsilon_{j+2}-\epsilon_{j+m+2}, \quad&m\in\{1,\ldots,i\},\\
h_n=\epsilon_{j+1-n}-\epsilon_{j+1}, \quad&n\in\{1,\ldots,j\},
\end{array}
\end{equation}
defines new variables
\begin{equation}\label{lkh}
l_1,l_2,\ k_1,\ldots,k_i, \ h_1,\ldots,h_j,
\end{equation}
in terms of $\epsilon_0,\ldots,\epsilon_{i+j+2}$.
On the new variables, the linear actions (\ref{w-linear}) are as follows,
\begin{equation}\label{coble-linear}
\begin{array}{rll}
w_0:& l_1\leftrightarrow l_2,& \\
w_{j+1}:& l_2\rightarrow -l_2, \ v\rightarrow v-l_2, & v\in\{l_1,k_1,\ldots,k_i,h_1,\ldots,h_j\},\\
w_j:& h_1\rightarrow -h_1, \ v\rightarrow v-h_1, & v\in\{l_1,l_2,h_2,\ldots,h_j\},\\
w_{j+1-n}:& h_{n-1}\leftrightarrow h_n, & n\in\{2,\ldots,j\},\\
w_{j+2}:& k_1\rightarrow -k_1, \ v\rightarrow v-k_1, & v\in\{l_1,l_2,k_2,\ldots,k_i\},\\
w_{j+1+m}:& k_{m-1}\leftrightarrow k_m, & m \in \{2,\ldots,i\},
\end{array}
\end{equation}
and the substitution (\ref{esub}) becomes (\ref{F1subs}).
\end{prop}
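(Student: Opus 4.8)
The plan is a change-of-variables computation, since the statement merely re-expresses Proposition~\ref{ee} in the new coordinates. First I would record the inverse relations
\begin{equation}
\alpha_0=l_2-l_1,\quad \epsilon_{j+2}-\epsilon_{j+1}=l_2,\quad \epsilon_{j+m+2}=\epsilon_{j+2}-k_m,\quad \epsilon_{j+1-n}=\epsilon_{j+1}+h_n,
\end{equation}
which are immediate from the defining association and which capture everything the actions (\ref{w-linear}) and the substitution (\ref{esub}) ``see'' in terms of the new variables; these are the only identities used below. (The linear map $(\epsilon_0,\ldots,\epsilon_{i+j+2})\mapsto(l_1,l_2,k_1,\ldots,k_i,h_1,\ldots,h_j)$ has a one-dimensional kernel, the overall translation direction, and this direction is fixed by every $w$, so the induced linear actions are well defined.)

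Next I would transport each generator in turn: for each of $w_0$, $w_{j+1}$, $w_j$, $w_{j+1-n}$ ($n>1$), $w_{j+2}$, $w_{j+1+m}$ ($m>1$) one substitutes the corresponding line of (\ref{w-linear}) into the defining expressions for $l_1,l_2,k_m,h_n$ and rewrites the outcome using the inverse relations. Two points need care. First, the active index set for $w_0$ is $\{0,\ldots,j+1\}$, so the common shift by $\alpha_0$ reaches $\epsilon_{j+1-n}$ and $\epsilon_{j+1}$ but neither $\epsilon_{j+2}$ nor $\epsilon_{j+m+2}$; consequently $w_0$ fixes every $k_m$ and $h_n$ and only interchanges $l_1$ and $l_2$. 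Second, one must track the induced actions on $\alpha_0$, namely $w_0:\alpha_0\to-\alpha_0$ and $w_{j+1}:\alpha_0\to\alpha_0+\epsilon_{j+1}-\epsilon_{j+2}$, while every remaining generator fixes $\alpha_0$ since it is symmetric in $\epsilon_1,\ldots,\epsilon_{j+1}$ and independent of $\epsilon_{j+2},\ldots,\epsilon_{i+j+2}$. A transposition such as $\epsilon_{j+1}\leftrightarrow\epsilon_{j+2}$ then translates into $l_2\to-l_2$ together with $v\to v-l_2$ on the remaining new variables; likewise $w_j$ yields a shift by $-h_1$ (with $h_1\to-h_1$) on the variables built on $\epsilon_j$ or $\epsilon_{j+1}$, and $w_{j+2}$ a shift by $-k_1$ (with $k_1\to-k_1$) on the variables built on $\epsilon_{j+2}$ or $\epsilon_{j+3}$, while the remaining generators act as the evident transpositions of the $k$'s and $h$'s. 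This reproduces (\ref{coble-linear}) line by line.

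Finally I would verify the substitution. Using the inverse relations, the eight factors of (\ref{esub}) turn into the eight factors of (\ref{F1subs}): $\epsilon_{j+1-n}-\epsilon_{j+m+2}=k_m+h_n-l_2$, hence $\alpha_0+\epsilon_{j+1-n,j+m+2}=k_m+h_n-l_1$; $\epsilon_{j+1,j+2}=-l_2$ and $\alpha_0+\epsilon_{j+1,j+2}=-l_1$; $\epsilon_{j+1,j+m+2}=k_m-l_2$ and $\alpha_0+\epsilon_{j+1,j+m+2}=k_m-l_1$; $\epsilon_{j+1-n,j+2}=h_n-l_2$ and $\alpha_0+\epsilon_{j+1-n,j+2}=h_n-l_1$. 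The two sign changes produced by $[-l_1]$ and $[-l_2]$ via oddness of $u\mapsto[u]$ cancel between numerator and denominator, and (\ref{esub}) collapses to (\ref{F1subs}).

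I do not expect a genuine obstacle: the proposition carries no content beyond the choice of coordinates, and the only place to slip is the index bookkeeping for $w_0$ and the tracking of the induced action on $\alpha_0$, both controlled by the identity $l_1=l_2-\alpha_0$ and the table of inverse relations. One may also note that, the change of variables being linear, once the equivalence of the substitutions (\ref{esub}) and (\ref{F1subs}) is established, the compatibility of the linear actions (\ref{coble-linear}) with Coble's birational actions (Definition~\ref{coble}) is automatic from Proposition~\ref{ee}; so strictly only the six generator computations and this one substitution identity need checking by hand.
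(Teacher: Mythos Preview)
Your proposal is correct and is precisely the intended verification: the paper states Proposition~\ref{symact} as a reformulation of Proposition~\ref{ee} and gives no separate proof, treating the linear change of variables and the factor-by-factor rewriting of (\ref{esub}) into (\ref{F1subs}) as routine. Your bookkeeping for the induced action on $\alpha_0$ and the cancellation of the two minus signs from $[-l_1]$, $[-l_2]$ are exactly the points that need attention, and you have handled them correctly.
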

Notice there is one fewer parameter in the list (\ref{lkh}) compared to (\ref{epsilon}).
The composition of actions that gives $\sigma$ in the birational case (\ref{abstr}),
i.e., \[(w_0 w_{j+2} w_{j+1})^2w_jw_{j+1}w_0w_{j+2}w_{j+1}w_j,\] in terms of the linear actions (\ref{coble-linear}), gives the linear action
\begin{equation}\label{alternative}
\begin{array}{ll}
  v\rightarrow -v, & v\in\{l_1,l_2,k_1,h_1\},\\
  k_{m}\rightarrow k_m-l_1-l_2+h_1, & m\in\{2,\ldots,i\},\\
  h_{n}\rightarrow h_n-l_1-l_2+k_1, & n\in\{2,\ldots,j\}.
\end{array}
\end{equation}
In Proposition \ref{q-solution}, the linear action associated with $\sigma$ is is the composition of this one with the map that changes sign of all variables.
It turns out that the linear action (\ref{alternative}) is compatible only in the case $F_I$, but not in the limiting cases $F_{II}$ and $F_{IV}$.

The solution (Proposition \ref{q-solution}) of the group associated with quadrirational maps can be combined with Proposition \ref{norm} to recover, by calculation, linear actions equivalent to the original ones in Proposition \ref{ee}.
\begin{prop}\label{tail}
Consider Propositions \ref{q-solution} and \ref{norm}.
The substitution of (\ref{F1subs}), (\ref{F2subs}) or (\ref{F4subs}) into expression (\ref{zdef}) yields
\begin{equation}\label{zprime}
z_{mn}=\frac{[k'_m+h'_n-l'_1][l'_1][k'_m-l'_2][h'_n-l'_2]}{[k'_m+h'_n-l'_2][l'_2][k'_m-l'_1][h'_n-l'_1]},
\end{equation}
where
\begin{equation}\label{primeshift}
\begin{array}{ll}
l_1'=k_1-k_3,& \\
l_2'=k_2-k_3,& \\
k_m'=k_{m+3}-k_3, & m\in\{1,\ldots,i-3\},\\
h_n'=h_n-l_1-l_2+k_1+k_2, & n\in\{1,\ldots,j\},\\
\end{array}
\end{equation}
where as before the first case (\ref{F1subs}) is written, and the case (\ref{F2subs}) is the same but with $l_2=l_1$, and substitution of (\ref{F4subs}) leads to the above with $l_2=l_1=0$.
The subset $a_2,\ldots,a_i,b_1,\ldots,b_j$ 
of the linear actions (\ref{la})
on variables (\ref{khvars}), 
induce the actions 
\begin{equation}\label{coble-linear2}
\begin{array}{rll}
a_2:& l'_1\leftrightarrow l'_2,& \\
a_3:& l'_2\rightarrow -l'_2, \ v\rightarrow v-l'_2, & v\in\{l'_1,k'_1,\ldots,k'_i,h'_1,\ldots,h'_j\},\\
b_1:& h'_1\rightarrow -h'_1, \ v\rightarrow v-h'_1, & v\in\{l'_1,l'_2,h'_2,\ldots,h'_j\},\\
b_n:& h'_{n-1}\leftrightarrow h'_n, & n\in\{2,\ldots,j\},\\
a_4:& k'_1\rightarrow -k'_1, \ v\rightarrow v-k'_1, & v\in\{l'_1,l'_2,k'_2,\ldots,k'_{i-3}\},\\
a_{m+3}:& k'_{m-1}\leftrightarrow k'_m, & m \in \{2,\ldots,i-3\},
\end{array}
\end{equation}
on the variables $l_1',l_2',k_1',\ldots,k_{i-3}',h_1',\ldots,h_j'$.
\end{prop}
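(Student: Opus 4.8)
The plan is to prove the two assertions in turn: the explicit form (\ref{zprime}) of $z_{mn}$ with the relabelling (\ref{primeshift}), and then the induced linear actions (\ref{coble-linear2}).

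For the first assertion, fix a column index $n$ and the constants $l_1,l_2$, and write $x_{mn}=F(k_m,h_n)$ with $F$ the function in (\ref{F1subs}). The key step is a factorisation of the difference of two entries in the same row. Writing $F(k,h_n)=C_n\,[k+h_n-l_1][k-l_2]/([k+h_n-l_2][k-l_1])$ with $C_n=[l_1][h_n-l_2]/([l_2][h_n-l_1])$, the difference $F(k_a,h_n)-F(k_b,h_n)$ has numerator
\[
[k_a+h_n-l_1][k_a-l_2][k_b+h_n-l_2][k_b-l_1]-[k_b+h_n-l_1][k_b-l_2][k_a+h_n-l_2][k_a-l_1],
\]
which is the left-hand side of a specialisation of the Riemann relation used in the proof of Proposition \ref{ee}, with the four arguments chosen to be $(k_a+k_b+h_n)/2-l_1$, $(k_a-k_b+h_n)/2$, $(k_a+k_b+h_n)/2-l_2$, $(k_a-k_b-h_n)/2$; it therefore collapses to $[k_a+k_b+h_n-l_1-l_2][l_2-l_1][k_a-k_b][h_n]$. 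Hence each difference equals a constant (independent of the pair $a,b$) times $[k_a+k_b+h_n-l_1-l_2][l_2-l_1][k_a-k_b][h_n]$ divided by $[k_a+h_n-l_2][k_a-l_1][k_b+h_n-l_2][k_b-l_1]$. Substituting the four such factors into the cross-ratio (\ref{zdef}) --- with $\{a,b\}=\{m+3,2\},\{1,3\},\{m+3,1\},\{2,3\}$ --- the constants, the factors $[l_2-l_1]$ and $[h_n]$, and all of the single-index denominator factors cancel in pairs, so that $z_{mn}$ is left equal to
\[
\frac{[k_{m+3}+k_2+h_n-l_1-l_2]\,[k_1-k_3]\,[k_{m+3}-k_2]\,[k_1+k_3+h_n-l_1-l_2]}{[k_{m+3}+k_1+h_n-l_1-l_2]\,[k_2-k_3]\,[k_{m+3}-k_1]\,[k_2+k_3+h_n-l_1-l_2]},
\]
and a short check (computing $k'_m+h'_n-l'_1$, $k'_m-l'_2$, $h'_n-l'_2$, etc.\ from (\ref{primeshift})) identifies this with (\ref{zprime}). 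For $F_{II}$ and $F_{IV}$ I would avoid repeating the computation: the substitutions (\ref{F2subs}), (\ref{F4subs}) arise from (\ref{F1subs}) by the degenerations $x\mapsto(x-1)/(l_2-l_1)$ as $l_2\to l_1$ and then $x\mapsto x+\zeta(l_1)$ as $l_1\to0$, each a single affine change applied to every entry of (\ref{Xdef}); since the cross-ratio (\ref{zdef}) is invariant under a common affine change of its four arguments, (\ref{zprime}) passes to these limits, with $l_1'=k_1-k_3$ and $l_2'=k_2-k_3$ unchanged and $h_n'$ specialised by setting $l_2=l_1$, then $l_1=0$, in (\ref{primeshift}).

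For the second assertion, observe that $a_2,\ldots,a_i,b_2,\ldots,b_j$ act on $k_1,\ldots,k_i,h_1,\ldots,h_j$ by permutations and fix $l_1,l_2$, so their effect on the primed variables is read off directly from (\ref{primeshift}): the offsets $-k_3$ in $l_1',l_2',k_m'$ and $-l_1-l_2+k_1+k_2$ in $h_n'$ turn a transposition of two $k$'s (resp.\ two $h$'s) into either a transposition or the sign-change-plus-shift moves appearing in the $a_2,a_3,a_4,a_{m+3},b_n$ rows of (\ref{coble-linear2}). The only generator needing real computation is $b_1=\sigma a_2\sigma$. I would first determine its linear action on the original variables from (\ref{la}): composing $\sigma$ (which fixes $k_1,h_1,l_1,l_2$), then $a_2$, then $\sigma$ again, gives that $b_1$ fixes $k_3,\ldots,k_i,h_1,l_1,l_2$, sends $k_1\mapsto l_1+l_2-h_1-k_2$ and $k_2\mapsto l_1+l_2-h_1-k_1$, and sends $h_n\mapsto h_n+h_1+k_1+k_2-l_1-l_2$ for $n\geq2$. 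Substituting into (\ref{primeshift}) one checks $h_1'\mapsto-h_1'$, $v\mapsto v-h_1'$ for $v\in\{l_1',l_2',h_2',\ldots,h_j'\}$, and $k_m'\mapsto k_m'$, which is exactly the $b_1$ row of (\ref{coble-linear2}).

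The main obstacle is the factorisation of $F(k_a,h_n)-F(k_b,h_n)$ in the first assertion --- recognising it as a one-term specialisation of the Riemann relation and picking the four arguments so the result comes out in the stated shape. Once that is in hand the cancellation in the cross-ratio is mechanical, the degenerate cases $F_{II},F_{IV}$ cost only the M\"obius-invariance remark rather than new identities, and the second assertion is routine bookkeeping after the linear action of $b_1$ has been worked out.
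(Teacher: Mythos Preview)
Your proof is correct. The paper states this proposition without proof, describing it only as something that can be recovered ``by calculation'', so there is no detailed argument to compare against; your proposal supplies exactly the computation the paper omits. The one genuinely non-mechanical step is the factorisation of $F(k_a,h_n)-F(k_b,h_n)$ via the Riemann relation, and your choice of arguments $u,v,w,x$ is right: it yields the single product $[k_a+k_b+h_n-l_1-l_2][l_2-l_1][k_a-k_b][h_n]$, after which the cross-ratio cancellation and the identification with (\ref{zprime}) under (\ref{primeshift}) go through as you say. Your limiting argument for $F_{II}$ and $F_{IV}$ (an affine reparametrisation of all $x_{mn}$ followed by $l_2\to l_1$, then $l_1\to 0$, leaving the cross-ratio unchanged) is a clean way to avoid repeating the computation; one can check directly that $\lim_{l_2\to l_1}(x_{mn}^{F_I}-1)/(l_2-l_1)$ reproduces (\ref{F2subs}). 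The second assertion is, as you note, bookkeeping once the linear action of $b_1=\sigma a_2\sigma$ on the unprimed variables has been written out, and your formulas for that action are correct.
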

It therefore follows that the linear actions (\ref{coble-linear2}) are compatible with the Coble actions (\ref{induced}) via the substitution (\ref{zprime}), recovering by a different route (\ref{coble-linear}).\\

\noindent
{\bf Acknowledgments.} 
JA acknowledges support from the Australian Research Council, Discovery Grant DP 110104151.
YY is supported by JSPS KAKENHI Grant Number 26287018.

\bibliographystyle{unsrt}
\bibliography{references}

\begin{thebibliography}{10}

\bibitem{COBI}
A.~B. Coble.
\newblock Points sets and allied {C}remona groups (part {I}).
\newblock {\em Transactions of the {A}merican {M}athematical {S}ociety},
  16(2):155--198, 1915.

\bibitem{COBII}
A.~B. Coble.
\newblock Point sets and allied {C}remona groups (part {I}{I}).
\newblock {\em Transactions of the {A}merican {M}athematical {S}ociety},
  17(3):345--385, 1916.

\bibitem{sc}
H.~Sakai.
\newblock Rational surfaces associated with affine root systems and geometry of
  the painlev{\'e} equations.
\newblock {\em Communications in Mathematical Physics}, 220(1):165--229, 2001.

\bibitem{10E9}
K.~Kajiwara, T.~Masuda, M.~Noumi, Y.~Ohta, and Y.~Yamada.
\newblock $_{10}{E}_{9}$ solution to the elliptic {P}ainlev\'e equation.
\newblock {\em J. Phys. A: Math. Gen.}, 36:L263--L272, 2003.

\bibitem{kmnoy}
K.~Kajiwara, T.~Masuda, M.~Noumi, Y.~Ohta, and Y.~Yamada.
\newblock Point configurations, {C}remona transformations and the elliptic
  difference {P}ainlev\'e equation.
\newblock {\em S{\'e}minaires et Congr{\'e}s}, 14:169--198, 2006.

\bibitem{ABSf}
V.~E. Adler, A.~I. Bobenko, and Yu.~B. Suris.
\newblock Geometry of {Y}ang-{B}axter maps: pencils of conics and
  quadrirational mappings.
\newblock {\em Comm. Anal. Geom.}, 12(5):967--1007, 2004.

\bibitem{AtkNie}
J.~Atkinson and M.~Nieszporski.
\newblock Multi-quadratic quad equations: integrable cases from a factorised
  discriminant hypothesis.
\newblock {\em Intl. Math. Res. Not}, 2012.
\newblock doi: 10.1093/imrn/rnt066.

\bibitem{ib}
J.~Atkinson.
\newblock Idempotent biquadratics, {Y}ang-{B}axter maps and birational
  representations of {C}oxeter groups.
\newblock {\em arxiv:1301.4613 [nlin.SI]}, 2013.

\bibitem{WE10}
S.~Mizoguchi and Y.~Yamada.
\newblock {$W(E_{10})$} symmetry, {M}-theory and {P}ainlev\'e equations.
\newblock {\em Physics Letters B}, 537(1):130--140, 2002.

\bibitem{Nui}
K.~Nuida.
\newblock On centralizers of parabolic subgroups in {C}oxeter groups.
\newblock {\em Journal of Group Theory}, 14:891--930, 2011.

\bibitem{Bri}
B.~Brink.
\newblock On centralizers of reflections in {C}oxeter groups.
\newblock {\em Bull. London Math. Soc.}, 28:465--470, 1996.

\bibitem{BH}
B.~Brink and R.~B. Howlett.
\newblock Normalizers of parabolic subgroups in {C}oxeter groups.
\newblock {\em Invent. Math.}, 136:323--351, 1999.

\bibitem{Allcock}
D.~Allcock.
\newblock Reflection centralizers in {C}oxeter groups.
\newblock {\em Transformation Groups}, 18(3):599--613, 2013.

\end{thebibliography}

\end{document}